\documentclass[12pt]{article}
\usepackage[a4paper]{geometry}
\usepackage[utf8]{inputenc}
\usepackage{graphicx}
\usepackage{dsfont}
\usepackage{fancyhdr}
\usepackage{booktabs}
\usepackage[colorlinks=true, allcolors=blue,pagebackref=true]{hyperref}
\usepackage{xurl}
\usepackage{xcolor}
\usepackage{colortbl}
\usepackage{subfig}
\usepackage{mathtools}
\usepackage[T1]{fontenc}
\usepackage{comment}
\usepackage{natbib}
\usepackage{eurosym}
\usepackage{babel}
\usepackage{amsmath}
\allowdisplaybreaks
\usepackage{xcolor}
\usepackage{enumitem}
\usepackage{float}
\usepackage{here}
\usepackage{wrapfig}
\usepackage{lscape}
\usepackage{bbold}
\usepackage{stmaryrd}
\usepackage[export]{adjustbox}
\usepackage{tabularx}
\usepackage{amssymb}
\usepackage{mathtools}
\usepackage{multirow,makecell, framed}
\usepackage[french,onelanguage, boxruled, vlined]{algorithm2e}
\usepackage{amsthm, bm}
\usepackage{lscape}
\usepackage{ulem}
\usepackage{vmargin}

\newcommand{\pkg}[1]{{\fontseries{m}\fontseries{b}\selectfont #1}}

\newcommand{\dint}{\displaystyle\int}
\newcommand{\dsum}{\displaystyle\sum}

\newtheorem{lemma}{Lemma}
\newtheorem{proposition}{Proposition}
\newtheorem{theorem}{Theorem}

\theoremstyle{remark}

\usepackage{authblk}

\renewcommand*{\backref}[1]{}
\renewcommand*{\backrefalt}[4]{}

\setmarginsrb{2cm}{1cm}{2cm}{1cm}{0.5cm}{0.5cm}{0.5cm}{0.5cm}

\title{A spatial scan statistical for categorical, functional data}
\author[,1]{Camille Frévent\thanks{Corresponding author: \texttt{camille.frevent@univ-lille.fr}}}
\affil[1]{Univ. Lille, CHU Lille, ULR 2694 - METRICS: Évaluation des technologies de santé et des pratiques médicales, F-59000 Lille, France}
\author[1]{Moustapha Sarr}
\author[2]{Sophie Dabo-Niang}
\affil[2]{Laboratoire Paul Painvelé UMR CNRS 8524, Inria-Datavers, University of Lille} 

\date{}

\begin{document}

\maketitle

\hrule
\section*{Abstract}
We have developed and tested a spatial scan statistic for categorical, functional data (CFSS) - a data structure within which current approaches cannot identify spatial clusters.  
Our methodology combines an encoding scheme for categorical, functional observations with a nonparametric scan statistic.  
In a simulation study with three distinct scenarios, the CFSS accurately recovered the simulated spatial clusters and gave very low false positive rates, high true positive rates, and high positive predictive values.  
We have also used the CFSS to identify and characterize spatial clusters in French air pollution data from the winter of 2024. \\[0.2cm]
\textbf{Keywords:} Categorical data, Cluster detection, Functional data, Spatial scan statistics \vspace{0.5cm}
\hrule

\section{Introduction}

Spatial cluster detection has long held a central position in spatial statistics; the primary objective is to develop statistical methodologies and computational procedures capable of identifying geographically localized regions in which the observed outcomes or characteristics differ significantly from those prevailing in the surrounding or overall study area. Cluster detection methods have a pivotal role in many domains (including epidemiology, environmental sciences, public health, and socio-economic research) in which the identification and characterization of spatial heterogeneity are essential for elucidating the underlying mechanisms and for making evidence-based decisions.

A spatial scan statistic is a widely used, methodologically robust statistical framework for the detection of spatial clusters. Using windows of varying locations and scales, the studied region is scanned for statistically significant clusters. There is no need for an \textit{a priori} specification of the clusters' position, spatial extent, or geometric configuration. This inherently data-driven approach makes spatial scan statistics particularly well suited to exploratory spatial analysis and the discovery of spatial patterns.

\cite{naus1965distribution} developed scan statistics first in a one-dimensional context and then in two dimensions \citep{naus1965clustering}. The technique was then applied to spatial data by \cite{kulldorff1995spatial} (for Bernoulli models) and \cite{kulldorff1997spatial} (for Poisson models). These foundational studies used a combination of likelihood ratio test statistics and Monte Carlo hypothesis testing to evaluate the statistical significance of identified clusters; this approach enables the detection of clusters of diverse sizes and shapes, while accounting for multiple testing.
Building on this early work, spatial scan statistics have been generalized to a broad range of data types and probabilistic models for continuous observations, such as Gaussian models \citep{kulldorff2009scan}, exponential models \citep{huang2007spatial}, and Weibull models \citep{bhatt2014spatial}. Many researchers have gone beyond the univariate case and have developed extensions for more complex data structures. Notably, \cite{cucala2017multivariate} developed a spatial scan statistic grounded in a multivariate normal model that explicitly incorporates correlations among variables, and \cite{cucala2019multivariate} introduced a multivariate nonparametric method based on a Wilcoxon–Mann–Whitney-type test statistic.

Alongside these methodological advances, progress in sensing technologies, electronic health records, and data storage infrastructures has enabled the acquisition of increasingly rich datasets in which information is captured continuously in time and space. This shift has fostered the development of functional data analysis \citep{ramsaylivre}, a statistical paradigm that models observations as realizations of random functions instead of finite-dimensional vectors. Consequently, many classical statistical techniques have been revisited and adapted to a functional context, including spatial scan statistics for functional data \citep{frevent2021detecting,frevent2023investigating}.

In many practical settings, however, the observed data are not real-valued. In particular, categorical variables arise frequently in the health sciences, when (for example) disease stages, treatment regimens, or clinical endpoints are recorded. Several methods for spatial cluster detection have been developed in these contexts. Most notably, \cite{jung2007spatial} and \cite{jung2010spatial} introduced spatial scan statistics based respectively on ordinal and multinomial modeling frameworks. These procedures are effective for detecting spatial heterogeneity in purely categorical spatial datasets.

In contrast, observations of patient trajectories and longitudinal health monitoring are often both categorical and functional in nature because the repeated recording of patients' health states yields time indexed, categorical-valued functions. The statistical analysis of this type of data involves challenges in both functional data analysis and categorical data modeling; one must simultaneously accommodate time dependence, the inherently functional nature of the observations, and the discrete, finite outcome space. To the best of our knowledge, none of the spatial cluster detection methods developed to date accounts rigorously and concurrently for all these features. Today's functional spatial scan statistics \citep{frevent2021detecting,frevent2023investigating} are designed for real-valued functional data, whereas categorical spatial scan methods \citep{jung2007spatial,jung2010spatial} are restricted to scalar, non-functional observations.

We addressed this gap by developing a novel spatial scan statistic that is specifically tailored to categorical, functional data and is capable of identifying spatial clusters in which the change over time in categorical states diverges significantly from that observed in the surrounding regions. Our approach simultaneously addresses three main methodological challenges. Firstly, it explicitly accommodates the functional nature of the observations and thereby captures time dependencies in individual trajectories. Secondly, it models the categorical outcome space within a statistically coherent, probabilistic framework. Thirdly, it incorporates spatial structure for the detection of geographically localized clusters without prior knowledge of their spatial location or extent. By integrating functional data representations, probabilistic modeling of categorical trajectories, and rigorous Monte Carlo-based inferential procedures, the new framework generalizes classical spatial scan statistics to a broader, more relevant class of problems.

The methodology has significant practical implications - especially in the health sector, where capturing spatial heterogeneity in patient trajectories or care pathways is crucial for designing targeted interventions and enhancing healthcare quality. The ability to pinpoint areas with unusual categorical trajectory patterns enables public health authorities and clinicians to identify emerging problems, allocate resources more effectively, and adapt interventions to local specificities.
The remainder of the present manuscript is organized as follows. The methodological framework that underlies the new spatial scan statistic for categorical, functional data (CFSS) is described in Section~\ref{sec:method}. The scan statistic's performance in simulation studies and its application to an empirical dataset are described in Section~\ref{sec:simu}. Lastly, concluding remarks, possible extensions, and directions for future research are discussed in Section~\ref{sec:discussion}.

\section{Methodology} \label{sec:method}

\subsection{General principle}

Let $\{X(t), X(t) \in E=\{e_1, \dots, e_J\}, t \in [0,T] \}$ be a categorical, continuous-time stochastic process where $E$ is a set of $J \ge 2$ states and $[0,T]$ is an interval of $\mathbb{R}$. \\
Let $s_1, \dots, s_K$ be $K$ non-overlapping locations of an observation domain $S \subset \mathbb{R}^2$ and $X_{1,1}\mathrel{,} \dots\mathrel{,} X_{1,n_1}\mathrel{,} \dots\mathrel{,} X_{K,1}\mathrel{,} \dots\mathrel{,} X_{K,n_K}$ be
the observations of $X$ in $s_1, \dots, s_K$, where $X_{k,i}$ denotes the $i\text{th}$ individual of spatial unit $s_k$. Hereafter, all observations are considered to be independent - a classical assumption in scan statistics. \\

The goal of spatial scan statistics is to identify spatial clusters and assess their statistical significance. This involves a null hypothesis $\mathcal{H}_0$ (the absence of a cluster) and a composite alternative hypothesis $\mathcal{H}_1$ (the presence of at least one cluster $w \subset S$
presenting abnormal values of $X$). \\

A spatial cluster is defined as a geographical area in which the profile of the individuals therein differs from that of the rest of the population.
In the case of categorical, functional data, several types of cluster can be considered: (i) the sequence of states in individuals in the cluster differs from that in individuals outside the cluster, (ii) individuals in the cluster change state more (or less) rapidly than individuals outside the cluster, and (iii) individuals in the cluster spend more (or less) time in certain states than individuals outside the cluster. \\

\begin{figure}[h]
\centering
\includegraphics[width=0.8\linewidth]{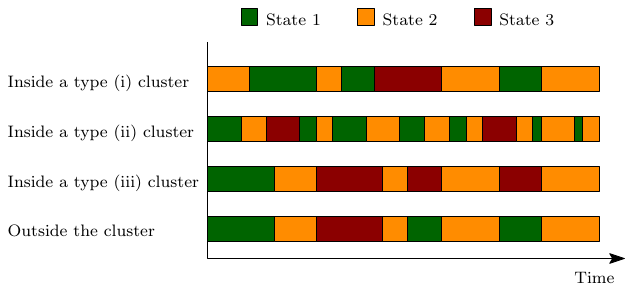}
\caption{An illustration of three types of clusters in categorical, functional data. Inside a type (i) cluster, the individuals' sequence of states differs from that outside the cluster; here, the first state differs, and the individuals move directly from state 1 to state 3 without passing through state 2. In the type (ii) cluster, individuals change state more (or less) rapidly than individuals outside the cluster. In the type (iii) cluster, the individuals spend more (or less) time in certain states than individuals outside the cluster.}
\label{fig:typesclusters}
\end{figure}

Following \cite{cressie1977some}, a scan statistic is defined by the maximum of a concentration index over a set of potential clusters $\mathcal{W}$. In the following and without loss of generality, we focus on variable-size, circular clusters (as introduced by \cite{kulldorff1997spatial}). The set of potential clusters $\mathcal{W}$ is the set of discs centered on a location and passing through another one:
$$ \mathcal{W} = \left\{ w_{k,k'} / 1 \le |w_{k,k'}| \le \frac{n}{2}, 1 \le k,k' \le K \right\}, $$
where $w_{k,k'}$ is the disc centered on $s_k$ and that passes through $s_{k'}$, $|w_{k,k'}|$ is the number of individuals in $w_{k,k'}$, and $n = \sum_{k=1}^K n_k$ is the total number of observed individuals. Thus, a given cluster never contains more than 50\% of the studied individuals, as recommended by \cite{kulldorff1995spatial}.
It should be noted that elliptical clusters \citep{kulldorff2006elliptic} and irregularly-shaped clusters \citep{tango2005flexibly,yin2018hybrid} have been suggested in the literature.

\subsection{Dimension reduction} \label{sec:dimreduction}

The major problem with categorical, functional data is the paucity of statistical methods for process them. The transformation of categorical, functional data into real values enables the latter to be processed more easily for spatial cluster detection. To this end, we use the multiple correspondence analysis for functional data (introduced by \cite{preda2021categorical}) to transform the $X_{k,i}$ into real vectors $\bm{Z}_{k,i} =(Z_{k,i}^{(1)}, \dots, Z_{k,i}^{(M)})^\top$, while retaining as much information as possible.

Let denote 
$\mathds{1}_{e_j}(t) = \left\{ \begin{array}{cc}
1 & \text{ if } X(t) = e_j \\
0 & \text{ if } X(t) \neq e_j \\
\end{array} \right.$.

We assume that ($\mathcal{A}_1$) $X$ is continuous in probability: $ \underset{h \rightarrow 0}{\lim} \ \mathbb{P}(X(t+h) \neq X(t)) = 0 $ and that ($\mathcal{A}_2$) $ \forall t \in [0,T], \forall e_j \in E, \mathbb{P}(X(t) = e_j) \neq 0 $. \\

The functional multiple correspondence analysis finds new orthogonal variables $Z^{(1)}, \dots, Z^{(M)}$ that maximize
\begin{equation}
\dint_0^T \dfrac{\mathbb{V}(E_t(Z))}{\mathbb{V}(Z)} \ \text{d}t
\label{eq:maximize}
\end{equation}

where $E_t(Z) = \dsum_{j=1}^J \mathbb{E}(Z|X(t) = e_j) \mathds{1}_{e_j}(t)$ is the conditional expectation operator on $L^2$, the space of real random variables with finite second moment associated with $X(t)$, $\mathbb{V}(E_t(Z))$ is the variance of $E_t(Z)$.

\begin{proposition} \phantom{newline} \\
$E_t$ is a self-adjoint, idempotent operator.
\label{prop:propEt}
\end{proposition}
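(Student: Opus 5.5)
The plan is to identify $E_t$ with the orthogonal projection of $L^2$ onto the closed subspace $\mathcal{S}_t=\operatorname{span}\{\mathds{1}_{e_1}(t),\dots,\mathds{1}_{e_J}(t)\}$, where $L^2$ carries the inner product $\langle U,V\rangle=\mathbb{E}(UV)$. Both claimed properties follow directly from the defining formula $E_t(Z)=\sum_{j=1}^J \mathbb{E}(Z\mid X(t)=e_j)\,\mathds{1}_{e_j}(t)$, using three facts. First, by ($\mathcal{A}_2$) each $p_j(t)=\mathbb{P}(X(t)=e_j)$ is nonzero, so $\mathbb{E}(Z\mid X(t)=e_j)=\mathbb{E}(Z\mathds{1}_{e_j}(t))/p_j(t)$ is well defined. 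Second, the indicators satisfy $\mathds{1}_{e_j}(t)\mathds{1}_{e_l}(t)=\delta_{jl}\mathds{1}_{e_j}(t)$. Third, $\sum_j \mathds{1}_{e_j}(t)=1$. With these, linearity and boundedness are immediate: $|\mathbb{E}(Z\mathds{1}_{e_j}(t))|\le \|Z\|\,p_j(t)^{1/2}$ by Cauchy--Schwarz, so $E_t$ is a bounded linear operator on $L^2$.

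For idempotence, I compute $E_t(E_t(Z))$ term by term. Since $E_t(Z)$ is constant, equal to $c_j=\mathbb{E}(Z\mid X(t)=e_j)$, on the event $\{X(t)=e_j\}$, we get $\mathbb{E}(E_t(Z)\mid X(t)=e_l)=\sum_j c_j\,\mathbb{E}(\mathds{1}_{e_j}(t)\mathds{1}_{e_l}(t))/p_l(t)=c_l$. Hence $E_t^2(Z)=\sum_l c_l\mathds{1}_{e_l}(t)=E_t(Z)$. Equivalently, $E_t$ fixes every element of $\mathcal{S}_t$ and its range lies in $\mathcal{S}_t$.

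For self-adjointness, I take $Z,Y\in L^2$ and expand
$$\langle E_t(Z),Y\rangle=\sum_{j=1}^J \frac{\mathbb{E}(Z\mathds{1}_{e_j}(t))\,\mathbb{E}(Y\mathds{1}_{e_j}(t))}{p_j(t)}.$$
This expression is symmetric in $(Z,Y)$, so it equals $\langle Z,E_t(Y)\rangle$.

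There is no real obstacle here. The only point needing care is the well-definedness of $E_t$ as a map on $L^2$, which is exactly where ($\mathcal{A}_2$) enters (nonzero denominators), together with the fact that elements of $L^2$ are equivalence classes, so the identities are to be read almost surely. I would state that explicitly and then present the two short computations above. I would also remark that, since $E_t$ is self-adjoint and idempotent, it is the orthogonal projector onto $\mathcal{S}_t$, so that $\mathbb{V}(E_t(Z))=\langle E_t(Z),Z\rangle-\mathbb{E}(Z)^2$. This is presumably the form used afterwards to turn the maximization of \eqref{eq:maximize} into an eigenproblem for $\int_0^T E_t\,\mathrm{d}t$.
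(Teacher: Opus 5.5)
Your proof is correct and follows essentially the same route as the paper. Self-adjointness comes from expanding the inner product into the symmetric sum $\sum_j \mathbb{E}(Z\mid X(t)=e_j)\,\mathbb{E}(Y\mid X(t)=e_j)\,\mathbb{P}(X(t)=e_j)$, and idempotence from the term-by-term computation using $\mathds{1}_{e_j}(t)\mathds{1}_{e_l}(t)=\delta_{jl}\mathds{1}_{e_j}(t)$; your remarks on well-definedness via ($\mathcal{A}_2$) and on boundedness are sound additions that the paper leaves implicit.
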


Maximizing (\ref{eq:maximize}) is equivalent to solving
\begin{equation}
\dint_0^T E_t(Z) \text{d}t = \lambda Z.
\label{ref:eq1}
\end{equation}
Then $Z^{(1)}, \dots, Z^{(M)}$ are orthogonal eigenvectors of $\dint_0^T E_t(\cdot) \ \text{d}t$ associated with the eigenvalues $\lambda^{(1)}, \dots, \lambda^{(M)}$ in descending order, excluding the trivial solution $Z^{(0)}=1$ associated with the trivial eigenvalue $\lambda^{(0)}=T$. \\ 

To find the non-trivial eigenvectors and eigenvalues, \cite{preda2021categorical} defined $\zeta(t) = \dfrac{1}{\lambda} E_t(Z) \in H([0,T])$ the Hilbert space of real processes with finite total variance $ \dint_0^T \mathbb{V}(\zeta(t)) \ \text{d}t $ and equipped with the scalar product $\langle \zeta, \zeta' \rangle_H = \dint_0^T \mathbb{E}[\zeta(t) \zeta'(t)] \ \text{d}t $.

When considering the conditional expectation with respect to $X(t')$, (\ref{ref:eq1}) can be rewritten as 
\begin{equation}
\dint_0^T K(t',t) \zeta(t) \ \text{d}t = \lambda \zeta(t')
\label{ref:eq2}
\end{equation}
with $K(t',t) = E_{t'}E_t$.

$\zeta(t)^{(0)} = \dfrac{1}{T}, t \in [0,T]$ is a trivial eigenfunction of $K$ associated with the trivial eigenvalue $\lambda^{(0)} = T$, and $\zeta(t)^{(1)}, \dots,  \zeta(t)^{(M)}$ are the non-trivial orthogonal (for $\langle \cdot, \cdot \rangle_H$) eigenfunctions associated with eigenvalues $\lambda^{(1)}, \dots, \lambda^{(M)}$, in descending order. \\

To ensure the uniqueness of $\zeta(t)^{(1)}, \dots,  \zeta(t)^{(M)}$, \cite{saporta1981methodes} and \cite{preda2021categorical} added the following constraint:
\begin{equation}
\dint_0^T \mathbb{E}\left[ \left(\zeta(t)^{(m)}\right)^2 \right] \ \text{d}t = 1.
\label{eq:constraint1}
\end{equation}

The eigenfunctions' properties are given in the following propositions.
\begin{proposition}[\cite{saporta1981methodes} p107]
\phantom{newline} \\
$\dint_0^T \mathbb{E}\left[ \zeta(t)^{(1)} \right] \ \text{d}t = \dots = \dint_0^T \mathbb{E}\left[ \zeta(t)^{(M)} \right] \ \text{d}t = 0$.
\label{prop:meanzeta}
\end{proposition}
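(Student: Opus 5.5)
We will deduce the claim from the orthogonality of the non-trivial eigenfunctions to the trivial eigenfunction $\zeta(t)^{(0)} = \frac{1}{T}$ in $H([0,T])$. Fix $m \in \{1,\dots,M\}$. The eigenvalue $\lambda^{(m)}$ differs from $\lambda^{(0)}=T$ because the trivial solution is excluded and the eigenvalues are ordered with $T$ as the largest. If there were multiplicity, we would simply choose $\zeta^{(m)}$ in the orthogonal complement of $\zeta^{(0)}$, as the construction of orthogonal eigenfunctions requires. Either way,
$$\langle \zeta^{(m)}, \zeta^{(0)} \rangle_H = \int_0^T \mathbb{E}\Big[\zeta(t)^{(m)} \cdot \tfrac{1}{T}\Big]\, \text{d}t = \frac{1}{T}\int_0^T \mathbb{E}\big[\zeta(t)^{(m)}\big]\,\text{d}t = 0,$$
and the statement follows immediately.

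The real work is to justify this orthogonality rather than take it from the text. I would show that the operator $\zeta \mapsto \int_0^T K(\cdot,t)\zeta(t)\,\text{d}t$ is self-adjoint on $H([0,T])$. By Proposition~\ref{prop:propEt}, each $E_t$ is self-adjoint on $L^2$. Then for $\zeta,\zeta' \in H$,
$$\int_0^T\!\!\int_0^T \mathbb{E}\big[E_{t'}E_t\zeta(t)\,\zeta'(t')\big]\,\text{d}t\,\text{d}t' = \int_0^T\!\!\int_0^T \mathbb{E}\big[\zeta(t)\,E_tE_{t'}\zeta'(t')\big]\,\text{d}t\,\text{d}t',$$
and Fubini turns this into the symmetry of the integral operator. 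Eigenfunctions of a self-adjoint operator with distinct eigenvalues are orthogonal. This gives the orthogonality directly.

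A second, more elementary route avoids the spectral theory. Take the expectation of the eigen-equation (\ref{ref:eq2}). Since $\mathbb{E}[E_{t'}E_t\zeta(t)] = \mathbb{E}[\zeta(t)]$ by the tower property, this yields
$$\int_0^T \mathbb{E}[\zeta(t)^{(m)}]\,\text{d}t = \lambda^{(m)}\,\mathbb{E}[\zeta(t')^{(m)}] \quad \text{for all } t'.$$
So $\mathbb{E}[\zeta(t')^{(m)}]$ is a constant $c$, and integrating over $t'$ gives $Tc = \lambda^{(m)} T c$, that is, $c(1-\lambda^{(m)})=0$. This is where I expect the main obstacle, in the normalization of eigenvalues. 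I must check carefully whether (\ref{ref:eq2}) carries the factor $T$ consistently with $\lambda^{(0)}=T$. The constant function gives $\int_0^T E_{t'}E_t(1/T)\,\text{d}t = 1$, not $T\cdot(1/T)$, so the bookkeeping of $\zeta=\frac{1}{\lambda}E_t(Z)$ matters. I would redo the argument at the level of $Z$ instead. Taking the expectation of (\ref{ref:eq1}) gives $T\,\mathbb{E}[Z]=\lambda\,\mathbb{E}[Z]$, so $\mathbb{E}[Z^{(m)}]=0$ whenever $\lambda^{(m)}\ne T$. Then $\mathbb{E}[\zeta(t)^{(m)}]=\frac{1}{\lambda^{(m)}}\mathbb{E}[Z^{(m)}]=0$ for every $t$, which is stronger than the claim, and integrating over $[0,T]$ concludes the proof.
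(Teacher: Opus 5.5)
Your first paragraph is exactly the paper's proof. The paper takes the orthogonality of $\zeta^{(m)}$ to $\zeta^{(0)}=1/T$ as part of the construction of the eigenfunctions and simply writes $\int_0^T \mathbb{E}[\zeta(t)^{(m)}]\,\text{d}t = T\langle \zeta^{(m)},\zeta^{(0)}\rangle_H = 0$. Your self-adjointness computation correctly justifies that orthogonality when $\lambda^{(m)}\neq T$.

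Your final $Z$-level argument is a genuinely different and correct route. The tower property applied to (\ref{ref:eq1}) gives $(T-\lambda^{(m)})\,\mathbb{E}[Z^{(m)}]=0$, hence $\mathbb{E}[\zeta(t)^{(m)}]=\mathbb{E}[Z^{(m)}]/\lambda^{(m)}=0$ for every $t$. This is stronger than the proposition and avoids the geometry of $H([0,T])$ entirely. Its cost is the hypothesis $\lambda^{(m)}\neq T$, which is not automatic: ``$T$ is the largest eigenvalue'' does not rule out repetition. For instance, if $X$ is constant in time, every $f(X(0))$ satisfies (\ref{ref:eq1}) with $\lambda=T$. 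In that degenerate case you must fall back on choosing $Z^{(m)}$ orthogonal to constants, which is exactly the convention the paper's one-line proof relies on in all cases.

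One correction: the normalization ``obstacle'' you raise is an arithmetic slip. Since $T\cdot\frac{1}{T}=1$, your computation $\int_0^T E_{t'}E_t(1/T)\,\text{d}t=1$ confirms, rather than contradicts, that $\zeta^{(0)}=1/T$ is an eigenfunction for $\lambda^{(0)}=T$. Likewise, in your expectation route, substituting $\mathbb{E}[\zeta(t)^{(m)}]=c$ into the left-hand side gives $Tc=\lambda^{(m)}c$. Integrating both sides over $t'$ gives $T^2c=\lambda^{(m)}Tc$, not $Tc=\lambda^{(m)}Tc$. Either way you get $c\,(T-\lambda^{(m)})=0$, the same conclusion as your $Z$-level argument, so that route was already complete and should not be presented as inconsistent.
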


\begin{proposition}[\cite{preda2021categorical}]
\phantom{newline} \\
$\zeta(t)^{(m)}$ can be written as
$$ \zeta(t)^{(m)} = \dsum_{j=1}^J a_j(t)^{(m)} \mathds{1}_{e_j}(t) $$
where $\{a_j(\cdot)^{(m)}, j = 1, \dots, J \}$ are deterministic encoding functions on $[0,T]$. Thus, finding $\zeta(\cdot)^{(m)}$ that satisfies (\ref{ref:eq2}) is equivalent to solving the more classical problem
\begin{equation}
\dint_0^T \dsum_{j=1}^J \mathbb{P}(X(t') = e_{j'}, X(t) = e_j) a_j(t)^{(m)} \ \text{d}t = \lambda a_{j'}(t')^{(m)} \mathbb{P}(X(t') = e_{j'}), 
\label{ref:eq3}
\end{equation}
and constraint (\ref{eq:constraint1}) becomes
$$ \dint_0^T \dsum_{j=1}^J \left[ a_j(t)^{(m)} \right]^2 \mathbb{P}(X(t) = e_j) \ \text{d}t = 1. $$
\label{prop:prop2}
\end{proposition}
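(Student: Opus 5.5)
The argument has three parts: show that each eigenfunction $\zeta(t)^{(m)}$ is a function of $X(t)$ alone, substitute that form into (\ref{ref:eq2}) to obtain (\ref{ref:eq3}), and rewrite the constraint (\ref{eq:constraint1}).

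\textbf{Step 1: the form of $\zeta(t)^{(m)}$.} By definition $\zeta(t)^{(m)} = \frac{1}{\lambda^{(m)}} E_t(Z^{(m)})$. From the definition of $E_t$,
$$E_t(Z) = \sum_{j=1}^J \mathbb{E}(Z\mid X(t)=e_j)\,\mathds{1}_{e_j}(t),$$
which is well defined for every $t$ by ($\mathcal{A}_2$). Setting $a_j(t)^{(m)} = \mathbb{E}(Z^{(m)}\mid X(t)=e_j)/\lambda^{(m)}$ therefore gives the claimed representation. These functions are deterministic.

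Conversely, any eigenfunction of (\ref{ref:eq2}) with $\lambda\neq 0$ satisfies $\zeta(t') = \frac{1}{\lambda}\int_0^T E_{t'}E_t\zeta(t)\,\mathrm{d}t$. The right-hand side lies in the range of $E_{t'}$, which is the span of $\{\mathds{1}_{e_j}(t')\}_j$. So every such eigenfunction has the stated form.

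\textbf{Step 2: deriving (\ref{ref:eq3}).} Substitute $\zeta(t)=\sum_j a_j(t)\mathds{1}_{e_j}(t)$ into (\ref{ref:eq2}). By Proposition~\ref{prop:propEt}, $E_t$ is idempotent, and $\zeta(t)$ is already $X(t)$-measurable, so $E_t\zeta(t)=\zeta(t)$. Hence
$$\int_0^T E_{t'}\zeta(t)\,\mathrm{d}t=\lambda\zeta(t').$$
Next, using the definition of $E_{t'}$ and linearity of conditional expectation, and exchanging integral and conditional expectation by Fubini (justified by finite total variance), we get
$$\int_0^T \sum_{j'} \mathds{1}_{e_{j'}}(t') \sum_j a_j(t)\,\mathbb{P}(X(t)=e_j\mid X(t')=e_{j'})\,\mathrm{d}t = \lambda \sum_{j'} a_{j'}(t')\,\mathds{1}_{e_{j'}}(t').$$

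\textbf{Step 3: matching coefficients.} The indicators $\mathds{1}_{e_{j'}}(t')$, $j'=1,\dots,J$, are disjoint. Each has positive probability by ($\mathcal{A}_2$), so they are linearly independent in $L^2$ and we may identify coefficients. Multiplying through by $\mathbb{P}(X(t')=e_{j'})$ then gives exactly (\ref{ref:eq3}). The argument reverses, so the two problems are equivalent.

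\textbf{Step 4: the constraint.} Because the indicators are disjoint, $\zeta(t)^2=\sum_j a_j(t)^2\,\mathds{1}_{e_j}(t)$. Taking expectations and integrating over $[0,T]$ turns (\ref{eq:constraint1}) into the stated normalization.

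\textbf{Main obstacle.} The delicate point is measure-theoretic: exchanging $\int_0^T$ with $E_{t'}$, and ensuring the $a_j$ are measurable in $t$. For this I would rely on ($\mathcal{A}_1$). Continuity in probability gives joint measurability of $(t,\omega)\mapsto X(t)$, possibly after taking a measurable version. It also gives continuity of $t\mapsto \mathbb{P}(X(t')=e_{j'},X(t)=e_j)$. Together with square-integrability, these make Fubini applicable.
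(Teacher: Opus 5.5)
Your proof is correct and follows the paper's route. It uses the same choice $a_j(t)^{(m)} = \mathbb{E}(Z^{(m)}\mid X(t)=e_j)/\lambda^{(m)}$, the same substitution into (\ref{ref:eq2}) using $E_t\zeta(t)=\zeta(t)$, the same multiplication by $\mathbb{P}(X(t')=e_{j'})$, and the same disjointness-of-indicators argument for the constraint. Your coefficient matching (linear independence of the $\mathds{1}_{e_{j'}}(t')$ via ($\mathcal{A}_2$)) and your observation that every eigenfunction with $\lambda\neq 0$ lies in the range of $E_{t'}$ supply the converse direction of the stated equivalence, which the paper's proof only shows as a sufficient condition.
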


\begin{proposition}[\cite{saporta1981methodes}]
\phantom{newline} \\
From $\zeta(t)^{(0)}, \dots, \zeta(t)^{(M)}$, the solutions $Z^{(0)}, \dots, Z^{(M)}$ of (\ref{ref:eq1}) can be deduced by
$$ Z^{(m)} = \dint_0^T \zeta(t)^{(m)} \text{d}t. $$
\label{prop:prop3}
\end{proposition}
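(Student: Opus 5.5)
Starting from the eigen-equation (\ref{ref:eq1}), $\int_0^T E_t(Z)\,\text{d}t = \lambda Z$, I would simply substitute the definition $\zeta(t) = \frac{1}{\lambda}E_t(Z)$ introduced by \cite{preda2021categorical}. For any non-trivial eigenvalue $\lambda^{(m)} \neq 0$, integrating $\zeta(t)^{(m)}$ over $[0,T]$ gives
$$\int_0^T \zeta(t)^{(m)}\,\text{d}t = \frac{1}{\lambda^{(m)}}\int_0^T E_t(Z^{(m)})\,\text{d}t = \frac{1}{\lambda^{(m)}}\,\lambda^{(m)} Z^{(m)} = Z^{(m)},$$
which is the claim. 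For $m=0$ the same computation applies with $\lambda^{(0)}=T$, and it is consistent with $\zeta^{(0)} = 1/T$ integrating to $Z^{(0)}=1$.

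The statement is really the converse direction. Given an eigenfunction $\zeta^{(m)}$ of (\ref{ref:eq2}), I would define $Z^{(m)} := \int_0^T \zeta(t)^{(m)}\,\text{d}t$ and verify that this $Z^{(m)}$ solves (\ref{ref:eq1}). The steps are as follows.
\begin{enumerate}
\item Apply $E_{t'}$ to $Z^{(m)}$ and exchange $E_{t'}$ with the time integral. This uses linearity and $L^2$-continuity of conditional expectation, together with the fact that $\int_0^T \mathbb{E}[\zeta(t)^2]\,\text{d}t<\infty$ by (\ref{eq:constraint1}), so the Bochner integral exists in $L^2$. The result is $E_{t'}(Z^{(m)}) = \int_0^T E_{t'}\zeta(t)^{(m)}\,\text{d}t$.
\item Use the form $\zeta(t)^{(m)} = E_t(Z^{(m)})/\lambda$, or equivalently Proposition~\ref{prop:prop2}'s representation $\zeta(t)^{(m)}=\sum_j a_j(t)^{(m)}\mathds{1}_{e_j}(t)$ as an element of the range of $E_t$. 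This identifies $E_{t'}\zeta(t)^{(m)}$ with $K(t',t)\zeta(t)^{(m)}$.
\item Invoke (\ref{ref:eq2}) to obtain $E_{t'}(Z^{(m)}) = \lambda\,\zeta(t')^{(m)}$.
\item Integrate over $t'$ to conclude $\int_0^T E_{t'}(Z^{(m)})\,\text{d}t' = \lambda Z^{(m)}$.
\end{enumerate}

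Along the way, step 3 also shows that $\zeta(t')^{(m)} = \frac{1}{\lambda}E_{t'}(Z^{(m)})$, so the correspondence is consistent. Orthogonality of the $Z^{(m)}$ follows from self-adjointness in Proposition~\ref{prop:propEt} and orthogonality of distinct eigenvectors.

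The main obstacle is the exchange of conditional expectation and integral in the first step, together with the measurability of $t\mapsto \zeta(t)$ as an $L^2$-valued map. I would handle this with assumption ($\mathcal{A}_1$), continuity in probability, which makes $t\mapsto \mathds{1}_{e_j}(t)$ continuous in $L^2$, and with the boundedness of $E_{t'}$ as an orthogonal projection, which is where idempotence from Proposition~\ref{prop:propEt} enters. Bounded operators commute with Bochner integrals, which settles the exchange.
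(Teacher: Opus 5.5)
Your argument is correct and is essentially the paper's proof. The paper likewise sets $Z^{(m)} = \int_0^T \zeta(t)^{(m)}\,\text{d}t$ and moves $E_t$ inside the integral. It then uses $E_{t'}(\zeta(t')^{(m)}) = \zeta(t')^{(m)}$, from the representation in Proposition~\ref{prop:prop2}, to rewrite the integrand as $E_tE_{t'}\zeta(t')^{(m)}$, and applies (\ref{ref:eq2}) to obtain $\lambda^{(m)}Z^{(m)}$. It also records the forward direction through $\zeta = \frac{1}{\lambda}E_t(Z)$, exactly as in your first paragraph.

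Your justification for exchanging $E_{t'}$ with the time integral (a bounded projection commuting with a Bochner integral, with ($\mathcal{A}_1$) for measurability) is extra rigor that the paper omits. In step 2, rely on the range-of-$E_t$ representation rather than on $\zeta = E_t(Z^{(m)})/\lambda$. For the newly defined $Z^{(m)}$, that identity is what your step 3 establishes, so invoking it earlier would be circular.
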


\begin{proposition}[\cite{saporta1981methodes}] 
\phantom{newline} \\
The non-trivial solutions of (\ref{ref:eq1}) $Z^{(1)}, \dots, Z^{(M)}$ are centered, uncorrelated random variables, such that $\mathbb{V}(Z^{(m)}) = \lambda^{(m)}, 1 \le m \le M$.
\label{prop:prop4}
\end{proposition}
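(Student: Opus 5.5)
We want to show that $Z^{(1)},\dots,Z^{(M)}$ are centered, uncorrelated, and satisfy $\mathbb{V}(Z^{(m)})=\lambda^{(m)}$. All three follow from Proposition~\ref{prop:prop3}, which gives $Z^{(m)}=\int_0^T\zeta(t)^{(m)}\,\text{d}t$, together with the facts that each $\zeta^{(m)}$ is an eigenfunction of $K$ and that the eigenfunctions are $\langle\cdot,\cdot\rangle_H$-orthonormal (orthogonality plus the normalisation (\ref{eq:constraint1})). Throughout, we exchange $\mathbb{E}$ and $\int_0^T$ by Fubini. This is legitimate because $\zeta^{(m)}(t)=\sum_j a_j(t)^{(m)}\mathds{1}_{e_j}(t)$ (Proposition~\ref{prop:prop2}), so $\int_0^T\mathbb{E}[(\zeta^{(m)})^2]\,\text{d}t=1$ ensures square integrability.

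\emph{Centering.} We have $\mathbb{E}(Z^{(m)})=\int_0^T\mathbb{E}[\zeta(t)^{(m)}]\,\text{d}t$, and this equals $0$ for $m\ge1$ by Proposition~\ref{prop:meanzeta}.

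\emph{Covariances.} Since the $Z^{(m)}$ are centered, it suffices to compute $\mathbb{E}[Z^{(m)}Z^{(m')}]$. The key identity is $\zeta(t)^{(m)}=\frac{1}{\lambda^{(m)}}E_t(Z^{(m)})$. This is the definition of $\zeta$; equivalently, it is recovered from (\ref{ref:eq1}) and Proposition~\ref{prop:prop3}. Writing $Z^{(m)}=\int_0^T\zeta(t)^{(m)}\,\text{d}t$, we get
\begin{align*}
\mathbb{E}[Z^{(m)}Z^{(m')}]
&=\int_0^T\mathbb{E}\big[\zeta(t)^{(m)}Z^{(m')}\big]\,\text{d}t
=\frac{1}{\lambda^{(m)}}\int_0^T\mathbb{E}\big[E_t(Z^{(m)})\,Z^{(m')}\big]\,\text{d}t.
\end{align*}
By Proposition~\ref{prop:propEt}, $E_t$ is self-adjoint and idempotent, so
\[
\mathbb{E}[E_t(Z^{(m)})Z^{(m')}]=\mathbb{E}[E_t(Z^{(m)})E_t(Z^{(m')})]=\lambda^{(m)}\lambda^{(m')}\,\mathbb{E}[\zeta(t)^{(m)}\zeta(t)^{(m')}].
\]
Substituting back gives
\[
\mathbb{E}[Z^{(m)}Z^{(m')}]=\lambda^{(m')}\langle\zeta^{(m)},\zeta^{(m')}\rangle_H=\lambda^{(m')}\delta_{mm'},
\]
which yields both the uncorrelatedness and $\mathbb{V}(Z^{(m)})=\lambda^{(m)}$.

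\emph{Main obstacle.} The delicate step is the relation $\zeta^{(m)}=E_t(Z^{(m)})/\lambda^{(m)}$ applied to the particular normalised eigenfunctions, that is, checking that the normalisation in (\ref{eq:constraint1}) is compatible with $Z^{(m)}=\int\zeta^{(m)}$. To do this, I would verify directly that $Z:=\int_0^T\zeta(t)\,\text{d}t$ satisfies $E_{t'}(Z)=\int_0^T E_{t'}E_t\,\zeta(t)\,\text{d}t=\lambda\zeta(t')$. The first equality holds because $E_t\zeta(t)=\zeta(t)$, since $\zeta(t)$ is a function of $X(t)$; the second is exactly (\ref{ref:eq2}). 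Hence $E_{t'}(Z)=\lambda\zeta(t')$ holds with no rescaling, and the computation above goes through exactly.
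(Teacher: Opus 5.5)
Your proof is correct and uses the same ingredients as the paper's: Proposition~\ref{prop:meanzeta} for centering, then self-adjointness of $E_t$, the identity $E_t(\zeta(t)^{(m)})=\zeta(t)^{(m)}$, the eigen-equation (\ref{ref:eq2}), and $\langle\cdot,\cdot\rangle_H$-orthonormality of the $\zeta^{(m)}$. The difference is only bookkeeping. You first package (\ref{ref:eq2}) into $E_t(Z^{(m)})=\lambda^{(m)}\zeta(t)^{(m)}$, an identity the paper also derives inside its proof of Proposition~\ref{prop:prop3}, and then use idempotence to treat covariance and variance together via $\delta_{mm'}$; the paper instead works directly on $\int_0^T\int_0^T\mathbb{E}[\zeta(t)^{(m)}\zeta(t')^{(m')}]\,\text{d}t\,\text{d}t'$ and handles the two cases separately.
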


We provide detailed versions of the proofs from \cite{saporta1981methodes} and \cite{preda2021categorical} in Supplementary Materials.

\subsection*{Multiple correspondence analysis in practice}

The practical computation of $\zeta_{1,1}^{(1)}$, $\dots$, $\zeta_{1,1}^{(M)}$, $\dots$, $\zeta_{K,n_K}^{(1)}$, $\dots$,  $\zeta_{K,n_K}^{(M)}$, and thus $Z_{1,1}^{(1)}$, $\dots$, $Z_{1,1}^{(M)}$, $\dots$, $Z_{K,n_K}^{(1)}$, $\dots$,  $Z_{K,n_K}^{(M)}$ from the observed data $X_{1,1}, \dots, X_{K,n_K}$ is usually obtained by approximating the encoding functions $\{a_{j, k, i}(\cdot)^{(m)}, 1 \le m \le M, 1 \le j \le J, 1 \le k \le K, 1 \le i \le n_k \}$ into an arbitrary finite basis of functions. 
The theoretical and computational details are presented in \cite{saporta1981methodes} and \cite{preda2021categorical}, and 
the procedure is implemented in the \textsf{R} package \pkg{cfda}. \\

In the following we select $M$ such that at least 90\% of the variance is explained.

\subsection{The CFSS}

For now, the data $X_{1,1}, \dots, X_{1,n_1}, \dots, X_{K,1}, \dots, X_{K,n_K}$ have been encoded into $M$-dimensional real (non-functional) data $\bm{Z}_{1,1}, \dots, \bm{Z}_{1,n_1}, \dots, \bm{Z}_{K,1}, \dots, \bm{Z}_{K,n_K}$. 

We can therefore adapt the nonparametric spatial scan statistic for multivariate data (developed by \cite{cucala2019multivariate} and implemented in the \textsf{R} package \pkg{HDSpatialScan} \citep{frevent2022r,HDSpatialScan}) to the present context.

We consider the following test hypotheses:
\begin{center}
$\mathcal{H}_0$: $\bm{Z}_{1,1}, \dots, \bm{Z}_{1,n_1}, \dots, \bm{Z}_{K,1}, \dots, \bm{Z}_{K,n_K}$ are identically distributed,
whatever their associated locations $s_1, \dots, s_K$ \\
vs. \\
$\mathcal{H}_1$: there exists $w \in \mathcal{W}$ such that the distribution of $\bm{Z}$ in $w$ differs from that of $\bm{Z}$ outside $w$. \\
\end{center}

Let
$$ R_{k,i} = \dfrac{1}{n} \dsum_{k'=1}^K \dsum_{i'=1}^{n_{k'}} \text{sgn}\left[A_{\bm{Z}} (\bm{Z}_{k,i} - \bm{Z}_{k',i'}) \right] $$ be the multivariate ranks where 
$ \text{sgn}(\cdot) $ is the multivariate sign function
$$ \begin{array}{rccl}
\text{sgn}: & \mathbb{R}^M & \rightarrow & \mathbb{R}^M \\
& \bm{z} & \mapsto & ||\bm{z}||_2^{-1} \bm{z} \ \mathds{1}_{\bm{z} \neq \bm{0}_M}
\end{array} $$
with $\bm{0}_M$ the $M$-dimensional vector composed only of 0, and $A_{\bm{Z}}$ is a data-based transformation matrix that makes the multivariate
ranks behave as though they were spherically distributed in the unit $M$-sphere:
$$ \dfrac{M}{n} \dsum_{k=1}^K \dsum_{i=1}^{n_k} R_{k,i} R_{k,i}^\top = \dfrac{1}{n} \dsum_{k=1}^K \dsum_{i=1}^{n_k} R_{k,i}^\top R_{k,i} I_M. $$

The test statistic
$$ W^{(w)} = \dfrac{Mn}{\dsum_{k=1}^K \dsum_{i=1}^{n_k} R_{k,i}^\top} \left( |w| || \bar{R}_w ||_2^2 + |w^\mathsf{c}| || \bar{R}_{w^\mathsf{c}} ||_2^2 \right), $$
where $\bar{R}_g = \dfrac{1}{|g|} \dsum_{\substack{k=1 \\ s_k \in g}}^K \dsum_{i=1}^{n_k} R_{k,i}, g \in \{w, w^\mathsf{c}\} $ allows one to test \citep{oja}:
\begin{center}
$\mathcal{H}_0: $ the distribution of $\bm{Z}$ is the same inside and outside $w$ \\
vs. \\
$\mathcal{H}_1: $ the distribution of $\bm{Z}$ in $w$ differs from that of $\bm{Z}$ outside $w$.
\end{center}

Then, following \cite{cucala2019multivariate}, we consider $ W^{(w)} $ as a concentration index for cluster detection and define the CFSS as
$$ \Lambda_\text{CFSS} = \underset{w \in \mathcal{W}}{\max} \ W^{(w)}. $$ 

The most likely cluster (MLC) is the potential cluster for which this maximum is obtained:
$$ \text{MLC}_\text{CFSS} = \underset{w \in \mathcal{W}}{\arg \max} \ W^{(w)}. $$

\subsection{Computing the statistical significance of the MLC}

Once the MLC has been identified, its statistical significance must be evaluated. The distribution of the scan statistic under $\mathcal{H}_0$ is intractable, due to the inherent dependency between statistics $W^{(w)}$ and $W^{(w')}$ when $w \cap w' \neq \emptyset$. \\

To estimate the p-value associated with the observed spatial scan statistic, we employ an approach based on random permutations and that is commonly referred to as ``random labelling''. 

Let $P$ denote the number of random permutations performed on the original dataset, and $\Lambda_\text{CFSS}^{(1)}, \dots, \Lambda_\text{CFSS}^{(P)}$ denote the spatial scan statistics obtained from these simulated datasets. According to \cite{dwass1957modified}, the estimated p-value for the scan statistic observed in the original data is calculated as follows:
$$ \widehat{p} = \dfrac{1 + \dsum_{p'=1}^P \mathds{1}_{\Lambda_\text{CFSS}^{(p')} \ge \Lambda_\text{CFSS}}}{1+P}. $$ 
 
Lastly, the MLC is considered statistically significant if its associated estimated p-value is less than a predefined type I error.

\section{Finite sample properties} \label{sec:simu}

\subsection{Simulation study}

We generated artificial datasets using the geographical coordinates of the 94 \textit{départements} (county-level administrative divisions) in mainland France. The location of each \textit{département} was defined by its centroid. A true cluster (denoted as $w$) was defined and simulated within each dataset. This cluster comprised the eight \textit{départements} located in the Paris metropolitan area (highlighted in red in the Supplementary Materials, Figure \ref{fig:map_simu}).

\subsubsection{Generation of the artificial datasets}

In each of the 94 spatial units, a total of $n_k = 10$ individual trajectories were simulated. Each trajectory corresponds to the realization of a three-state process $\{X_{k,i}(t) \in E = \{e_1, e_2, e_3\}, t \in [0,18] \} $ and was generated using a transition matrix $\bm{P}_k$ and a vector of intensities $\bm{\lambda}_k = (\lambda_{1,k}, \lambda_{2,k}, \lambda_{3,k})^\top$.

For each individual, the trajectory begins in $e_1$. The time spent in this state was generated according to an exponential distribution with rate $\frac{1}{\lambda_{1,k}}$ and the next state was chosen on the basis of the probabilities in the first row of $\bm{P}_k$. The same procedure was then repeated for all subsequent states (using the appropriate intensities from $\bm{\lambda}_k$ and the probabilities from $\bm{P}_k$) until the maximum time ($T=18$) was reached. \\

Three simulation scenarios were considered: (i) the transition matrices $\bm{P}_k$ do not differ inside and outside the spatial cluster, no state is absorbing, and the spatial cluster differs from the rest of the area in terms of intensity, (ii) the transition matrices $\bm{P}_k$ are the same inside and outside the spatial cluster, state $e_3$ is absorbing, and the spatial cluster differs from the rest of the area in terms of intensity, (iii) the transition matrices $\bm{P}_k$ are not the same inside and outside the spatial cluster, state $e_3$ is absorbing, and the intensities are the same inside and outside the spatial cluster.

For scenario (i), we define all $\bm{P}_k$ as
$$ \bm{P}_k = \bm{P} = \begin{pmatrix}
0 & 0.5 & 0.5 \\
0.5 & 0 & 0.5 \\
0.5 & 0.5 & 0
\end{pmatrix} 
$$
and $\bm{\lambda}_k = (0.2 + \alpha \mathds{1}_{s_k \in w},0.2 + \alpha \mathds{1}_{s_k \in w} ,0.2 + \alpha \mathds{1}_{s_k \in w})^\top$, $\alpha \in \{ 0, 0.15, 0.30, 0.45, 0.60, 0.75 \}$.

For scenario (ii), 
$$ \bm{P}_k = \bm{P} = \begin{pmatrix}
0 & 1 & 0 \\
0.5 & 0 & 0.5 \\
0.5 & 0.5 & 0
\end{pmatrix} 
$$
$\bm{\lambda}_k = (0.2 + \alpha \mathds{1}_{s_k \in w},0.2 + \alpha \mathds{1}_{s_k \in w} ,0)^\top$, $\alpha \in \{ 0, 0.05, 0.10, 0.15, 0.20, 0.25, 0.30 \}$.

For scenario (iii), $\bm{P}_k$ was defined as
$$ P_k = 
\begin{pmatrix}
0 & 1 & 0 \\
0.7 - \gamma \mathds{1}_{s_k \in w} & 0 & 0.3 + \gamma \mathds{1}_{s_k \in w} \\
0.5 & 0.5 & 0
\end{pmatrix} , \gamma \in \{ 0, 0.10, 0.20, 0.30, 0.40, 0.50  \} 
$$
and $\bm{\lambda}_k = \bm{\lambda} = (0.2,0.2,0)^\top$. 

It is noteworthy that $\alpha = 0$ and $\gamma = 0$ were also tested, in order to evaluate the maintenance of the nominal type I error. An example of the data for all scenarios is given in the Supplementary Materials (Figure \ref{fig:exemple_simu}).

\subsubsection{Evaluation of the method's performance}

For each scenario and each value of $\alpha$ (or $\gamma$), we simulated 1000 artificial datasets. The p-value associated with each MLC was estimated by generating 999 random permutations of the transformed data (for more details, see Section \ref{sec:dimreduction}). A type I error of 5\% was considered for the rejection of $\mathcal{H}_0$. Our CFSS's performance was evaluated with regard to four criteria: the power, the true positive rate, the false positive rate, and the positive predictive value.
The power was defined as the proportion of simulations leading to the rejection of $\mathcal{H}_0$, according to the type I error. For the simulated datasets yielding to the rejection of $\mathcal{H}_0$, the true positive rate is the mean proportion of correctly detected sites in $w$, the false positive rate is the mean proportion of sites in $w^\textsf{c}$ included in the detected cluster, and the positive predictive value corresponds to the proportion of individuals in $w$ within the detected cluster.

We compared the performances of our approach using 10, 20 and 30 basis functions when approximating the encoding functions $a_{j,k,i}(\cdot)^{(m)}$ (for more details, see Section \ref{sec:dimreduction}).

\subsubsection{Results of the simulation study}

\begin{figure}[h]
\centering
\includegraphics[width = \linewidth]{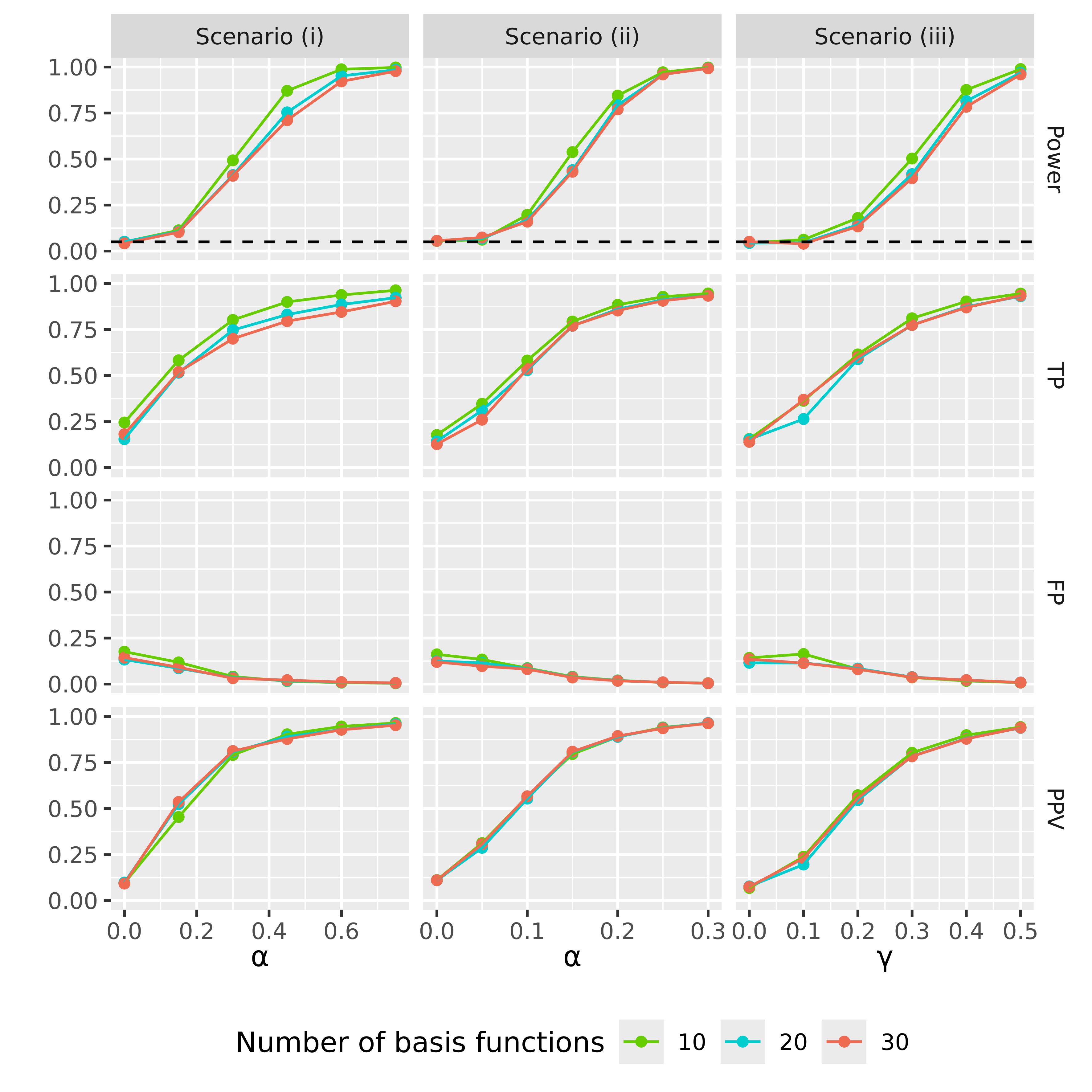}
\caption{Simulation of the CFSS's ability to detect a spatial cluster in categorical, functional data. For each scenario, the power curves, the true positive rates, the false positive rates, and the positive predictive values for detection of the spatial cluster as the MLC are shown. $\alpha$ and $\gamma$ are the parameters that control the cluster intensity.}
\label{fig:res_simu}
\end{figure}

For $\alpha = 0$ and $\gamma = 0$, the method maintained the correct type I error (0.05), regardless of the scenario and the number of basis functions (Figure \ref{fig:res_simu}). 
The number of basis functions had very little impact on performance, which is consistent with \cite{preda2021categorical}. 

The method demonstrated a satisfactory level of performance, as characterized by very low false positive rates, high true positive rates, and high positive predictive values. Furthermore, the approach successfully detected the simulated cluster in all scenarios considered.

\subsection{Application to real data} \label{sec:appli}

We considered the daily mean $\text{PM}_{10}$ concentrations in France between January $1^\text{st}$ and March $31^\text{st}$, 2024, using data from the Geod'air database (\url{https://www.geodair.fr/donnees/consultation}). The location of the 279 monitoring stations and the observed concentration curves are shown in Figure \ref{fig:presappli} (left panel). We then categorized the concentrations according to the Atmo index (see Table \ref{tab:atmo}, \url{https://www.atmo-nouvelleaquitaine.org/article/lindice-atmo-un-outil-precis-et-complet-de-la-qualite-de-lair}). To avoid very rare modalities, we grouped Atmo's ``bad'', ``very bad'' and ``extremely bad'' categories into a single ``bad'' category.

\begin{figure}[h!]
\centering
\begin{minipage}{0.59\linewidth}
\includegraphics[width=\linewidth]{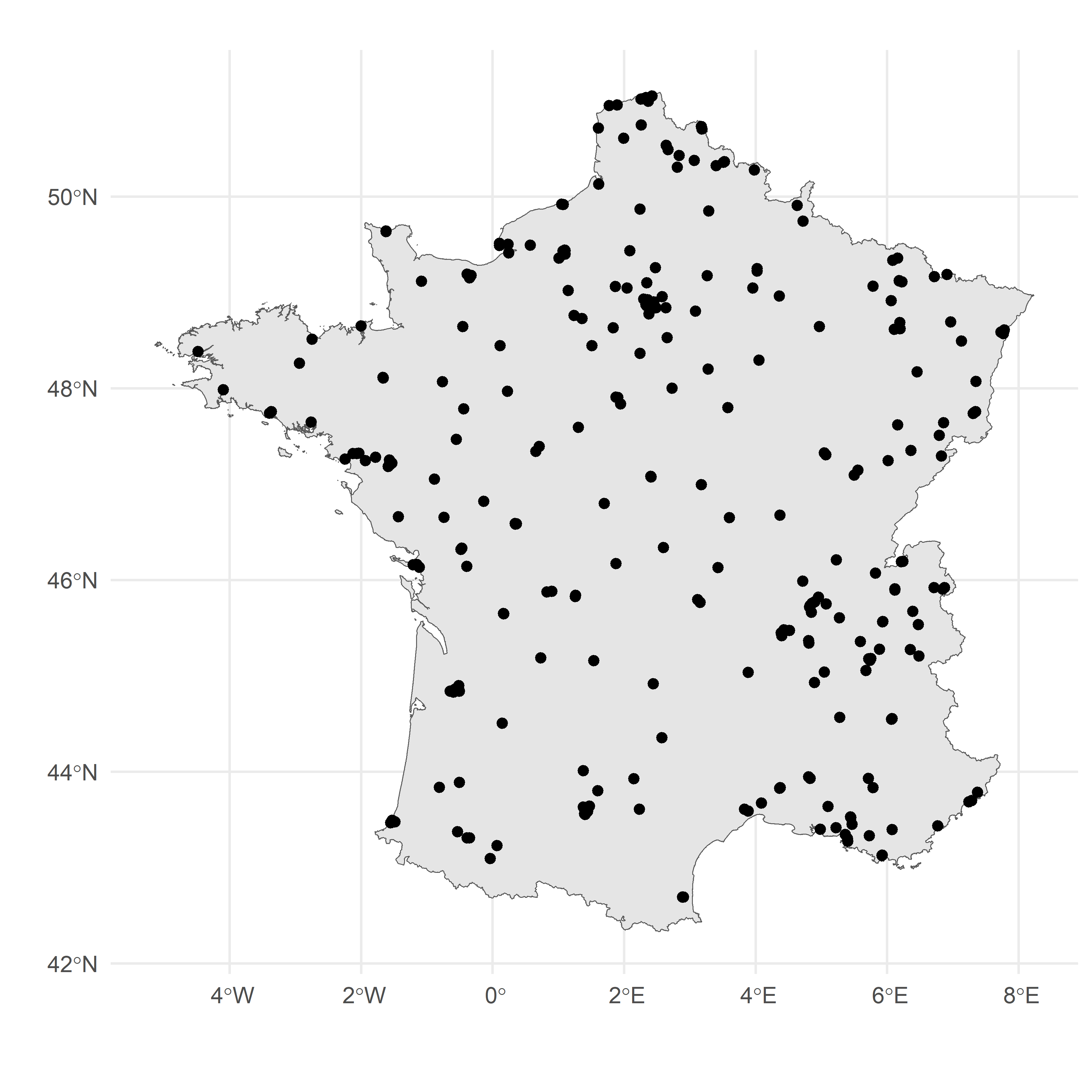} \\
\includegraphics[width=\linewidth]{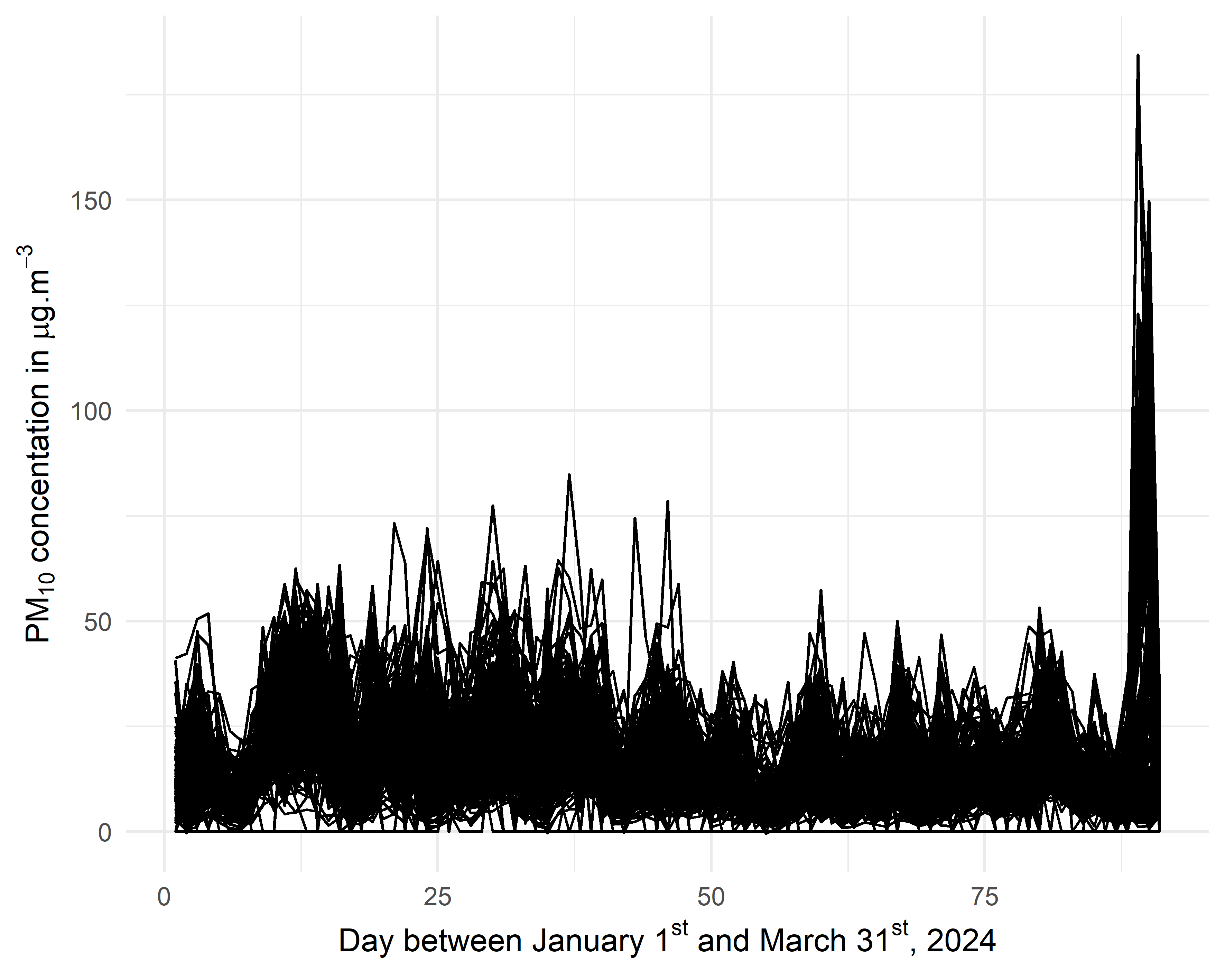}
\end{minipage} \hfill
\begin{minipage}{0.39\linewidth}
\includegraphics[width=\linewidth]{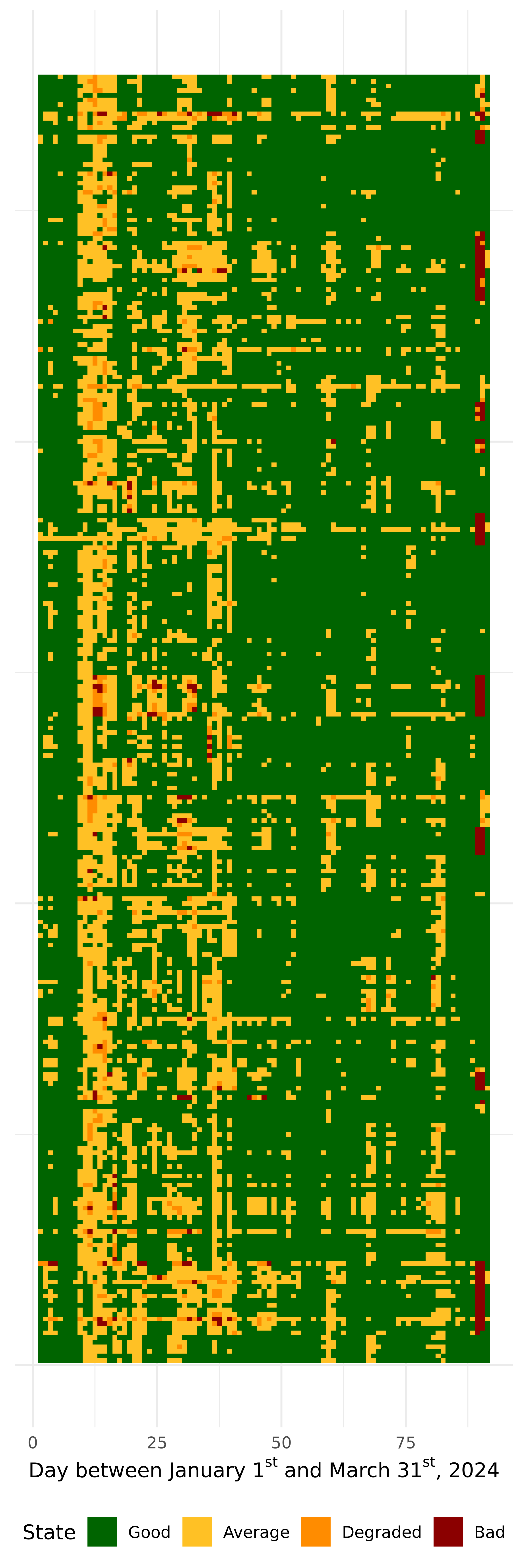}
\end{minipage}
\caption{Map of the monitoring stations and daily mean $\text{PM}_{10}$ concentrations curves between January $1^\text{st}$ and March $31^\text{st}$, 2024 (left panel) and the associated categorical trajectories (right panel).}
\label{fig:presappli}
\end{figure}

\begin{table}[h!]
\caption{The Atmo index, according to the range of daily mean $\text{PM}_{10}$ concentrations.}
\label{tab:atmo}
\centering
\begin{tabular}{cc}
\hline
\textbf{Range of daily mean $\text{PM}_{10}$ concentrations ($\mu g. m^{-3}$)} & \textbf{Atmo index} \\
$[0,20]$ & Good \\
$]20,40]$ & Average \\
$]40,50]$ & Degraded \\
$]50,100]$ & Bad \\
$]100,150]$ & Very bad \\
$>150$ & Extremely bad \\ \hline
\end{tabular}
\end{table}

We applied the CFSS by considering 10 basis functions when encoding the data. The considered set of potential clusters $\mathcal{W}$ was the set described in Section \ref{sec:method}.
The statistical significance of the MLC was evaluated in 999 Monte-Carlo permutations, and a spatial cluster was considered to be statistically significant when the associated p-value was below 0.05.

The detected MLC ($\widehat{p} = 0.001$) is located in southeastern France and comprises 71 stations (see Figure \ref{fig:mapMLC}). The associated trajectories (Figure \ref{fig:curveMLC}) show that relative to monitoring stations outside the MLC, the monitoring stations in the MLC recorded more transitions to ``average'', ``degraded'' and ``bad'' states and spent more time in these states.

\begin{figure}[h]
    \centering
    \includegraphics[width=0.6\linewidth]{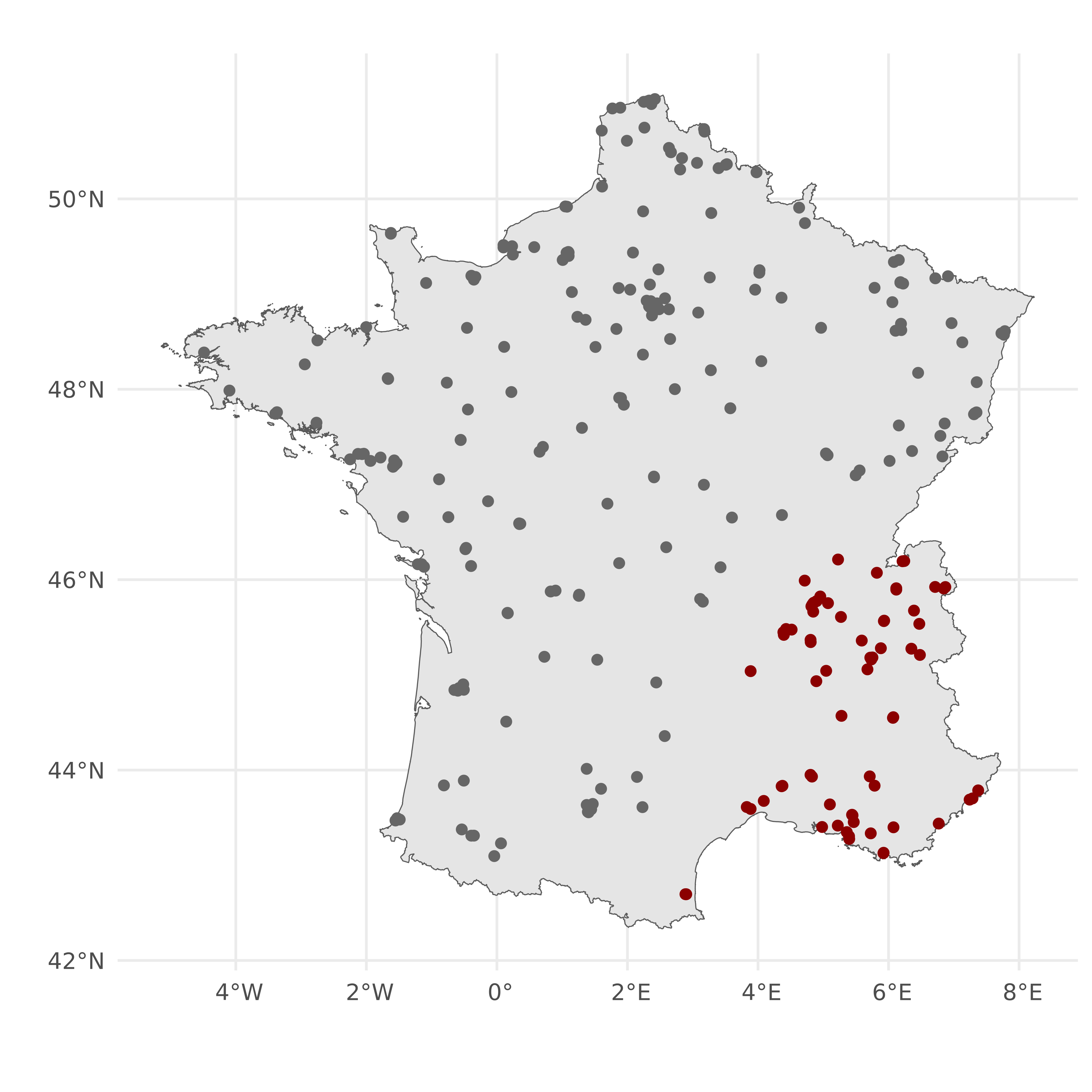}
    \caption{The MLC (in red) detected with the CFSS for the daily mean $\text{PM}_{10}$ concentrations (categorized using the Atmo index).}
    \label{fig:mapMLC}
\end{figure}

\begin{figure}[h]
    \centering
    \includegraphics[width=\linewidth]{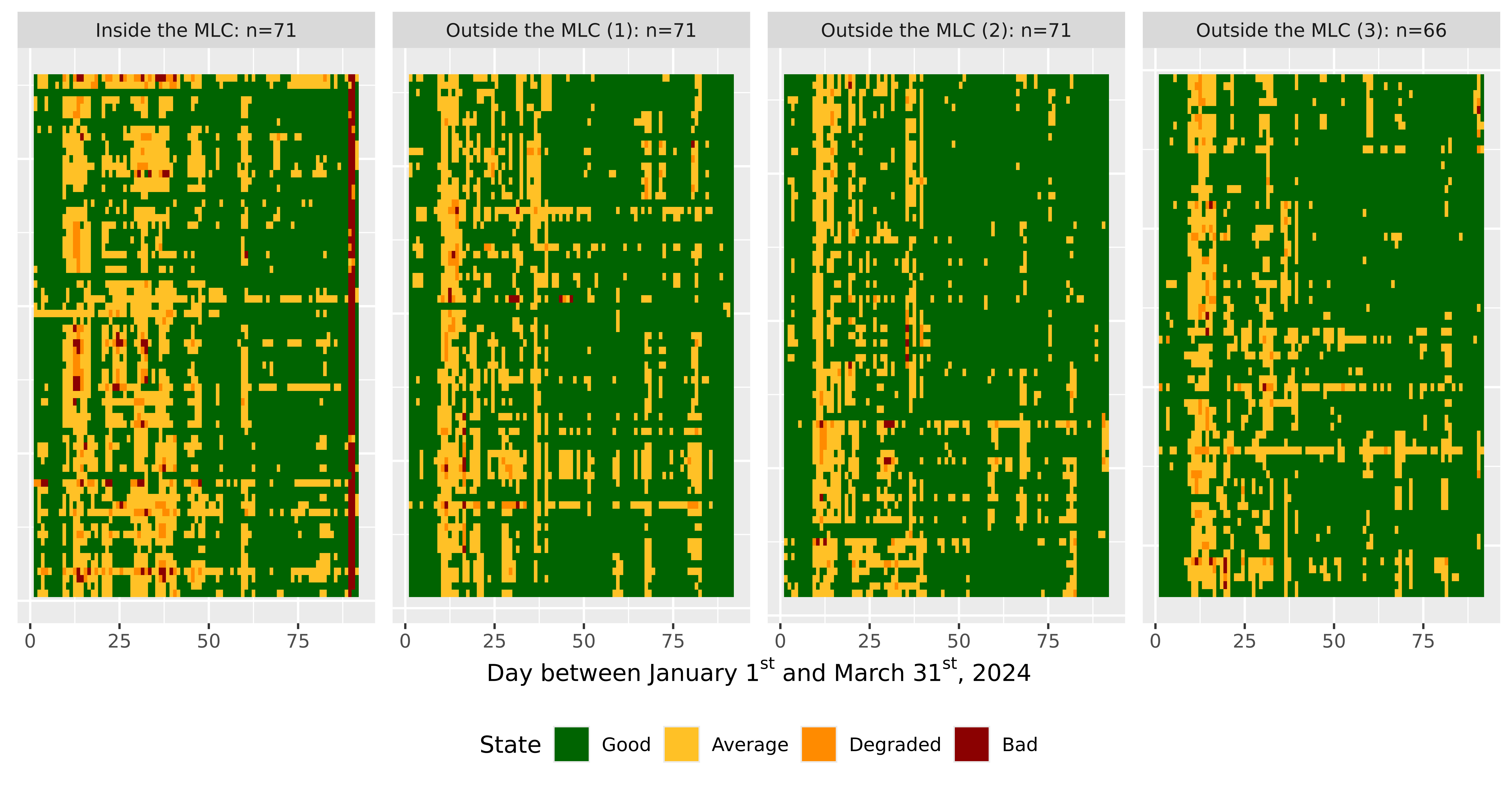}
    \caption{Trajectories inside and outside the MLC detected with the CFSS for the daily mean $\text{PM}_{10}$ concentrations (categorized using the Atmo index).}
    \label{fig:curveMLC}
\end{figure}

\section{Discussion} \label{sec:discussion}

Here, we described our development and testing of a CFSS. This method can identify groups of patients whose trajectories differ significantly from those of others and thus provides a crucial tool for informing public health policymakers.
Previously published methods were not able to detect spatial clusters within categorical, functional data. \\

The results of our simulation study demonstrated that our CFSS has a very satisfactory level of performance. In all of the three scenarios considered, the method correctly identified the simulated spatial cluster with very low false positive rates, high true positive rates, and high positive predictive values. The type I error was maintained at the nominal level. The results of the simulation study also showed that the number of basis functions used when encoding the data has very little impact on performance. This is consistent with the report by \cite{preda2021categorical}. \\

We next applied the CFSS to pollution data; it detected a spatial cluster within which the monitoring stations recorded more transitions to ``average'', ``degraded'' and ``bad'' states than elsewhere and spent more time in these states. \\

It should be noted that we considered circular clusters only. However, the CFSS can be adapted to detect non-circular clusters, such as elliptic clusters \citep{kulldorff2006elliptic} and arbitrary-shaped clusters \citep{tango2005flexibly}. However, these shapes would generate many more potential clusters than the circular approach and would increase the computation time accordingly. \\

Here, we only considered the MLC. However, it would be interesting to investigate secondary clusters, i.e., statistically significant spatial clusters that co-exist with the MLC. An initial approach for detecting secondary clusters was developed by \cite{kulldorff1995spatial}. It consists in estimating the p-value of all potential clusters as if they were the MLC and then considering all statistically significant cluster that do not overlap with a more statistically significant one.
As the concentration indexes of the potential clusters are compared with the maximum of the concentration indexes of each dataset under $\mathcal{H}_0$, the secondary clusters must be strong enough to reject $\mathcal{H}_0$ ``on their own''. \\
Alternatively, \cite{zhang2010spatial}'s sequential approach involves removing the data corresponding to the MLC from the dataset and then iteratively running the scan statistic to find the next most likely cluster. Although other methods have been suggested \citep{li2011spatial, lin2016spatial, xu2016stepwise}, maintenance of the nominal type I error is always a difficulty. \\

The current CFSS can detect clusters based on a single categorical variable. However, in future work, we intend to extend the CFSS to cases involving several categorical variables. \\
The CFSS might also be useful for detecting clusters in datasets composed of qualitative variables and quantitative variables; this will also be addressed in future work.

\appendix

\bibliography{bibliography.bib}
\bibliographystyle{chicago}

\section{Supplementary Materials: Proofs of Section \ref{sec:method}'s results}

\subsection*{Proof of Proposition \ref{prop:propEt}}

\textit{Recall of Proposition \ref{prop:propEt}.} \\
$E_t$ is a self-adjoint and idempotent operator.

\begin{proof} \phantom{newline} \\
Let $Z,Z'$ be two random variables in $L^2$. \\
$\langle Z, E_t(Z') \rangle = \mathbb{E}\left[ Z E_t(Z') \right] = \dsum_{j=1}^J \mathbb{E}\left[ Z \mathbb{E}(Z'|X(t) = e_j) \mathds{1}_{e_j}(t) \right] =  \dsum_{j=1}^J \mathbb{E}(Z'|X(t) = e_j) \mathbb{E}( Z \mathds{1}_{e_j}(t) ) $. \\
Since $\mathbb{E}( Z \mathds{1}_{e_j}(t) ) = \mathbb{E}(Z|X(t) = e_j) \mathbb{P}(X(t)=e_j) $, \\
we deduce $\langle Z, E_t(Z') \rangle = \dsum_{j=1}^J \mathbb{E}(Z'|X(t) = e_j) \mathbb{E}(Z|X(t) = e_j) \mathbb{P}(X(t)=e_j) $. \\
Similarly,
$\langle E_t(Z), Z' \rangle = \dsum_{j=1}^J \mathbb{E}(Z'|X(t) = e_j) \mathbb{E}(Z|X(t) = e_j) \mathbb{P}(X(t)=e_j) $. \\
This shows that $E_t$ is self-adjoint.

Moreover,
\begin{align*}
E_t(E_t(Z)) &= \dsum_{j=1}^J \mathbb{E} \left[ E_t(Z) |X(t) = e_j \right] \mathds{1}_{e_j}(t) \\
&= \dsum_{j=1}^J \mathbb{E} \left[ \dsum_{j'=1}^J \mathbb{E}(Z|X(t) = e_{j'}) \mathds{1}_{e_{j'}}(t) |X(t) = e_j \right] \mathds{1}_{e_j}(t) \\
&= \dsum_{j=1}^J \dsum_{j'=1}^J \mathbb{E}(Z|X(t) = e_{j'}) \mathbb{P} \left[ X(t) = e_{j'} |X(t) = e_j \right] \mathds{1}_{e_j}(t) \\
&= \dsum_{j=1}^J \mathbb{E}(Z|X(t) = e_j) \mathds{1}_{e_j}(t) = E_t(Z)
\end{align*}
This concludes the proof.
\end{proof}

\subsection*{Proof of Proposition \ref{prop:meanzeta}}

\textit{Recall of Proposition \ref{prop:meanzeta}.} \\
$\dint_0^T \mathbb{E}\left[ \zeta(t)^{(1)} \right] \ \text{d}t = \dots = \dint_0^T \mathbb{E}\left[ \zeta(t)^{(M)} \right] \ \text{d}t = 0$.

\begin{proof}
\phantom{newline} \\
Let $m \in \{1, \dots M\}$,
$\zeta^{(m)}$ is orthogonal to $\zeta^{(0)}$. \\
Thus, $\dint_0^T \mathbb{E}\left[ \zeta(t)^{(m)} \right] \ \text{d}t = T \dint_0^T \mathbb{E}\left[ \zeta(t)^{(m)}  \zeta(t)^{(0)} \right] \ \text{d}t = 0$. 
\end{proof}

\subsection*{Proof of Proposition \ref{prop:prop2}}

\textit{Recall of Proposition \ref{prop:prop2}.} \\
$\zeta(t)^{(m)}$ can be written as
$$ \zeta(t)^{(m)} = \dsum_{j=1}^J a_j(t)^{(m)} \mathds{1}_{e_j}(t) $$
where $\{a_j(\cdot)^{(m)}, j = 1, \dots, J \}$ are deterministic encoding functions on $[0,T]$. Thus, finding $\zeta(\cdot)^{(m)}$ that satisfies (\ref{ref:eq2}) is equivalent to solving the more classic problem
\begin{equation}
\dint_0^T \dsum_{j=1}^J \mathbb{P}(X(t') = e_{j'}, X(t) = e_j) a_j(t)^{(m)} \ \text{d}t = \lambda a_{j'}(t')^{(m)} \mathbb{P}(X(t') = e_{j'}). 
\end{equation}
And constraint (\ref{eq:constraint1}) becomes
$$ \dint_0^T \dsum_{j=1}^J \left[ a_j(t)^{(m)} \right]^2 \mathbb{P}(X(t) = e_j) \ \text{d}t = 1. $$

\begin{proof}
\phantom{newline} \\
From $\zeta(t)^{(m)} = \dfrac{1}{\lambda} \dsum_{j=1}^J \mathbb{E}(Z^{(m)}|X(t) = e_j) \mathds{1}_{e_j}(t)$, it follows that $\zeta(t)^{(m)}$ can be written as
$$ \zeta(t)^{(m)} = \dsum_{j=1}^J a_j(t)^{(m)} \mathds{1}_{e_j}(t) $$
with $a_j(t)^{(m)} = \dfrac{1}{\lambda} \mathbb{E}(Z^{(m)}|X(t) = e_j)$. \\
Then Equation (\ref{ref:eq2}) becomes
\begin{align*}
\dint_0^T K(t',t) \dsum_{j=1}^J a_j(t)^{(m)} \mathds{1}_{e_j}(t) \ \text{d}t &= \lambda \dsum_{j'=1}^J a_{j'}(t')^{(m)} \mathds{1}_{e_{j'}}(t') \\
\dint_0^T \dsum_{j=1}^J a_j(t)^{(m)} E_{t'}\left[ E_t(\mathds{1}_{e_j}(t)) \right] \ \text{d}t &= \lambda \dsum_{j'=1}^J a_{j'}(t')^{(m)} \mathds{1}_{e_{j'}}(t') \\
\dint_0^T \dsum_{j=1}^J \dsum_{j'=1}^J a_j(t)^{(m)} \mathbb{P}(X(t) = e_j | X(t') = e_{j'}) \mathds{1}_{e_{j'}}(t') \ \text{d}t &= \lambda \dsum_{j'=1}^J a_{j'}(t')^{(m)} \mathds{1}_{e_{j'}}(t') \stepcounter{equation}\tag{\theequation}\label{eq_aj}
\end{align*}
If $\{a_j(\cdot)^{(m)}\}$ satisfies for all $1 \le j' \le J$:
\begin{equation}
\dint_0^T \dsum_{j=1}^J a_j(t)^{(m)} \mathbb{P}(X(t) = e_j | X(t') = e_{j'}) \ \text{d}t = \lambda a_{j'}(t')^{(m)},
\label{eq:eq3}
\end{equation}
then it satisfies (\ref{eq_aj}).
And from Assumption $(\mathcal{A}_2)$, (\ref{eq:eq3}) is equivalent to
\begin{equation}
\dint_0^T \dsum_{j=1}^J a_j(t)^{(m)} \mathbb{P}(X(t) = e_j, X(t') = e_{j'}) \ \text{d}t = \lambda a_{j'}(t')^{(m)} \mathbb{P}(X(t') = e_{j'}).
\label{eq:eq4}
\end{equation}
Constraint (\ref{eq:constraint1}) can be written as
\begin{align*}
\dint_0^T \mathbb{E}\left[ \dsum_{j=1}^J \dsum_{j'=1}^J a_j(t)^{(m)} a_{j'}(t)^{(m)} \mathds{1}_{e_j}(t) \mathds{1}_{e_{j'}}(t) \right] \ \text{d}t &= 1 \\
\dint_0^T \dsum_{j=1}^J \left[a_j(t)^{(m)}\right]^2 \mathbb{P}(  X(t) = e_j) \ \text{d}t &= 1.
\end{align*}
This concludes the proof.
\end{proof}

\subsection*{Proof of Proposition \ref{prop:prop3}}

\textit{Recall of Proposition \ref{prop:prop3}.} \\
From $\zeta(t)^{(0)}, \dots, \zeta(t)^{(M)}$, the solutions $Z^{(0)}, \dots, Z^{(M)}$ of (\ref{ref:eq1}) can be deduced by
$$ Z^{(m)} = \dint_0^T \zeta(t)^{(m)} \text{d}t. $$

\begin{proof}
\phantom{newline} \\
Let $Z^{(m)} = \dint_0^T \zeta(t)^{(m)} \text{d}t$, then
\begin{align*}
\dint_0^T E_t \left[ Z^{(m)} \right] \text{d}t &= \dint_0^T E_t \left[ \dint_0^T \zeta(t')^{(m)} \text{d}t' \right] \text{d}t \\
&= \dint_0^T \dint_0^T E_t \left[ \zeta(t')^{(m)} \right] \text{d}t' \text{d}t
\end{align*}
Next, from $\zeta(t')^{(m)} = \dsum_{j=1}^J a_j(t')^{(m)} \mathds{1}_{e_j}(t')$,
\begin{align*}
E_{t'}(\zeta(t')^{(m)}) &= \dsum_{j=1}^J a_j(t')^{(m)} \dsum_{j'=1}^J \mathbb{P}(X(t') = e_j | X(t') = e_{j'}) \mathds{1}_{e_{j'}}(t') \\
&= \dsum_{j=1}^J a_j(t')^{(m)} \mathds{1}_{e_j}(t') = \zeta(t')^{(m)}.
\end{align*}
Thus, 
\begin{align*}
\dint_0^T E_t \left[ Z^{(m)} \right] \text{d}t &= \dint_0^T \dint_0^T E_t E_{t'} \left[ \zeta(t')^{(m)} \right] \text{d}t' \text{d}t \\
&= \dint_0^T \lambda^{(m)} \zeta(t)^{(m)} \text{d}t = \lambda^{(m)} Z^{(m)}.
\end{align*}

From Equation (\ref{ref:eq1}), we get that for $\zeta(t)^{(m)} = \dfrac{1}{\lambda} E_t(Z^{(m)})$ solution of (\ref{ref:eq2}),
$$ \dint_0^T E_t(Z^{(m)}) \text{d}t = \lambda^{(m)} Z^{(m)}.$$
Then, $\dint_0^T \lambda^{(m)} \zeta(t)^{(m)} \text{d}t = \lambda^{(m)} Z^{(m)}$, and can be written as
$$ Z^{(m)} = \dint_0^T \zeta(t)^{(m)} \text{d}t . $$
\end{proof}

\subsection*{Proof of Proposition \ref{prop:prop4}}

\textit{Recall of Proposition \ref{prop:prop4}.} \\
The non-trivial solutions of (\ref{ref:eq1}) $Z^{(1)}, \dots, Z^{(M)}$ are centered and uncorrelated random variables such that $\mathbb{V}(Z^{(m)}) = \lambda^{(m)}, 1 \le m \le M$.

\begin{proof}
\phantom{newline} \\
$\mathbb{E}(Z^{(m)}) = \dint_0^T \mathbb{E}\left(\zeta(t)^{(m)} \right) \ \text{d}t = 0$ from Proposition \ref{prop:meanzeta}.
Then, for $1 \le m, m' \le M, m \neq m'$, 
\begin{align*}
Cov(Z^{(m)}, Z^{(m')}) &= \mathbb{E}(Z^{(m)} Z^{(m')}) = \mathbb{E}\left( \dint_0^T \dint_0^T \zeta(t)^{(m)} \zeta(t')^{(m')} \ \text{d}t \ \text{d}t' \right) \\
&= \mathbb{E}\left[ \dint_0^T \dint_0^T E_t\left(\zeta(t)^{(m)}\right) E_{t'}\left(\zeta(t')^{(m')}\right) \ \text{d}t \ \text{d}t' \right] \text{ since } E_t(\zeta(t)^{(m)}) = \zeta(t)^{(m)} \\
&= \dint_0^T \dint_0^T  \mathbb{E}\left[ E_t\left(\zeta(t)^{(m)}\right) E_{t'}\left(\zeta(t')^{(m')}\right) \right] \ \text{d}t \ \text{d}t' \\
&= \dint_0^T \dint_0^T  \mathbb{E}\left\{
\zeta(t)^{(m)}
E_t\left[ E_{t'}\left(\zeta(t')^{(m')}\right)  \right] \right\} \ \text{d}t \ \text{d}t' \text{ since } E_t \text{ is self-adjoint} \\
&= \dint_0^T \mathbb{E}\left\{
\zeta(t)^{(m)}
\dint_0^T E_t\left[ E_{t'}\left(\zeta(t')^{(m')}\right)  \right] \text{d}t' \right\} \ \text{d}t \\
&= \dint_0^T \mathbb{E}\left(
\zeta(t)^{(m)}
\lambda^{(m')} \zeta(t)^{(m')} \right) \ \text{d}t \text{ from Equation (\ref{ref:eq2})}  \\
&= \lambda^{(m')} \dint_0^T \mathbb{E}\left(
\zeta(t)^{(m)}
 \zeta(t)^{(m')} \right) \ \text{d}t = 0 \text{ since } \zeta^{(m)} \text{ and } \zeta^{(m')} \text{ are orthogonal}
\end{align*}

\begin{align*}
\mathbb{V}(Z^{(m)}) &= \mathbb{E} \left[ \left(Z^{(m)}\right)^2 \right] \\
&= \mathbb{E}\left[ \dint_0^T \dint_0^T \zeta(t)^{(m)} \zeta(t')^{(m)} \ \text{d}t \ \text{d}t' \right] \\
&= \mathbb{E}\left[ \dint_0^T \dint_0^T E_t\left(\zeta(t)^{(m)}\right) E_{t'}\left(\zeta(t')^{(m)}\right) \ \text{d}t \ \text{d}t' \right] \text{ since } E_t(\zeta(t)^{(m)}) = \zeta(t)^{(m)} \\
&= \dint_0^T \dint_0^T \mathbb{E}\left[  E_t\left(\zeta(t)^{(m)}\right) E_{t'}\left(\zeta(t')^{(m)}\right) \right] \text{d}t \ \text{d}t'  \\
&= \dint_0^T \mathbb{E}\left\{ \zeta(t)^{(m)}  \dint_0^T E_t\left[ E_{t'}\left(\zeta(t')^{(m)}\right)\right]  \text{d}t' \right\} \text{d}t \text{ since } E_t \text{ is self-adjoint} \\
&= \dint_0^T \mathbb{E}\left\{ \zeta(t)^{(m)}  \lambda^{(m)} \zeta(t)^{(m)} \right\} \text{d}t \text{ from Equation (\ref{ref:eq2})}  \\
&= \lambda^{(m)} \dint_0^T \mathbb{E}\left\{ \left[\zeta(t)^{(m)}\right]^2 \right\} \text{d}t = \lambda^{(m)} \text{ from Constraint (\ref{eq:constraint1})}
\end{align*}
This concludes the proof.
\end{proof}

\section{Supplementary Materials for the simulation study}

\begin{figure}[H]
\centering
\includegraphics[width=0.5\linewidth]{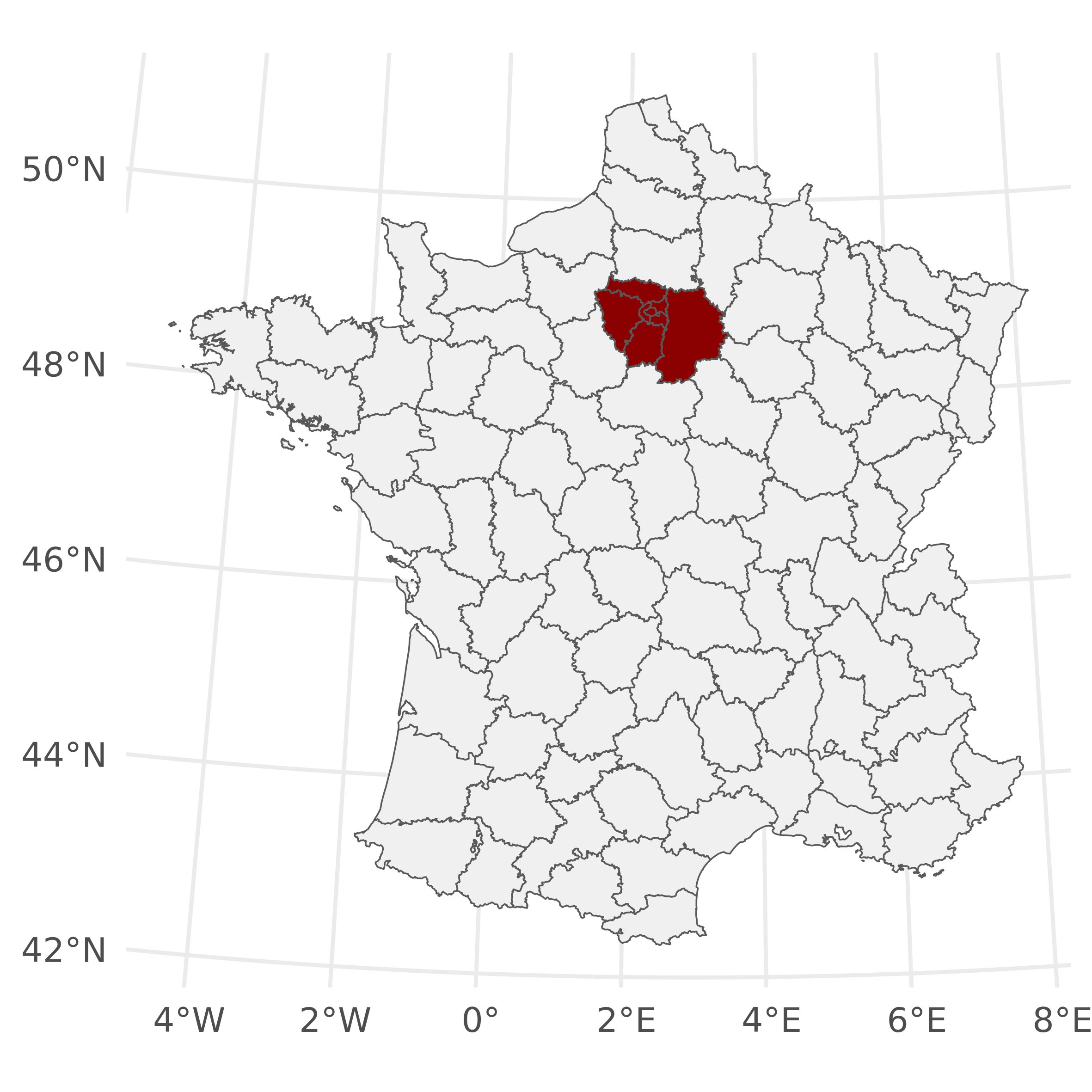}
\caption{The 94 \textit{départements} in mainland France and the spatial cluster (in red) simulated for each artificial dataset.}
\label{fig:map_simu}
\end{figure}

\begin{figure}[H]
\centering
\includegraphics[width=0.8\linewidth]{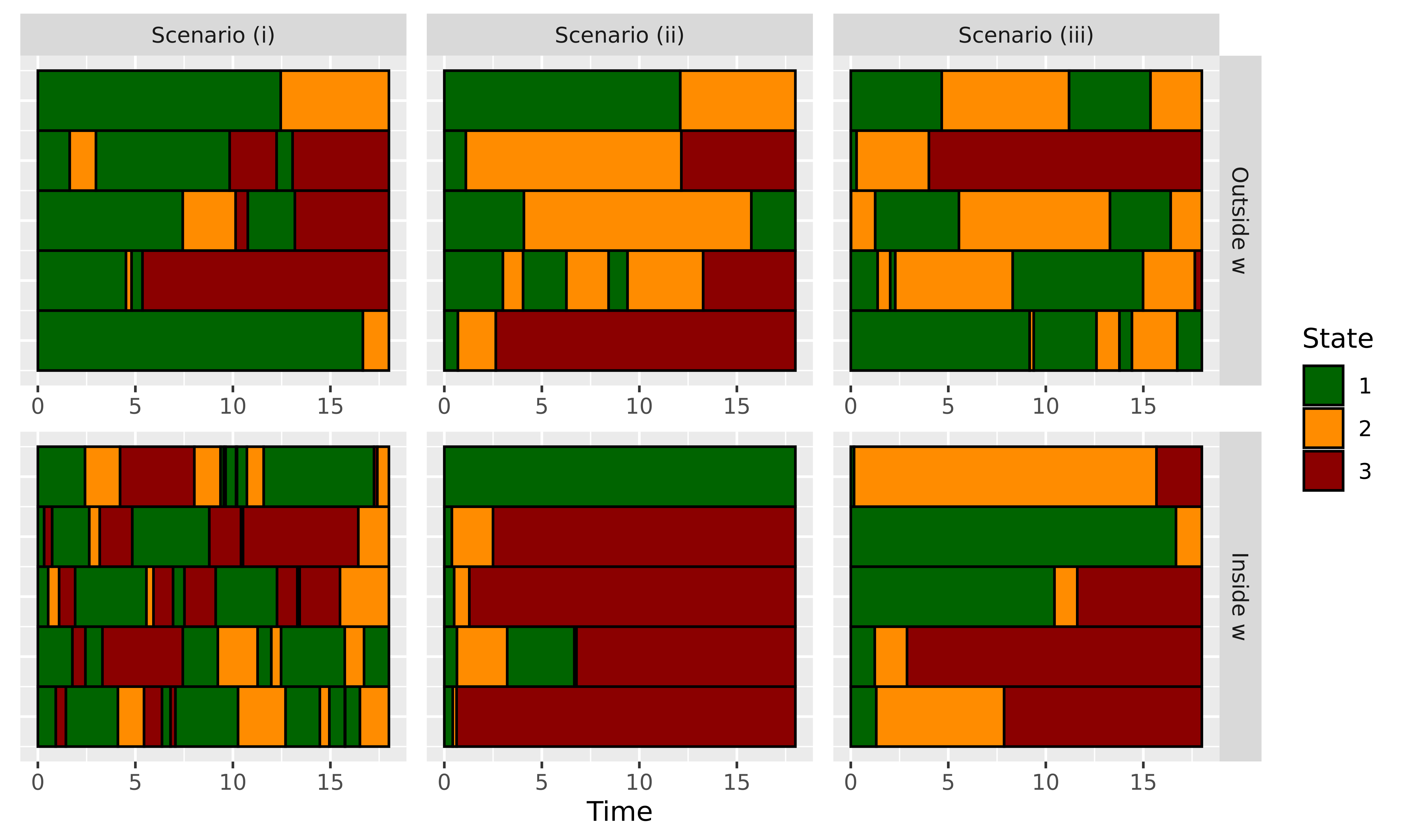}
\caption{Five individual trajectories inside and outside the simulated cluster $w$, for each of three scenarios (scenario (i): $\alpha = 0.6$, scenario (ii): $\alpha = 0.25$, scenario (iii): $\gamma = 0.4$).}
\label{fig:exemple_simu}
\end{figure}

\end{document}